\PassOptionsToPackage{table}{xcolor}
\documentclass[sigconf, nonacm]{acmart}
\usepackage{algorithm}
\usepackage{enumitem}
\usepackage{xcolor} % [table] option passed via \PassOptionsToPackage above (avoids clash with newer acmart that preloads xcolor)

\newcommand{\del}[1]{}
\newcommand{\rtag}[1]{}
\usepackage[normalem]{ulem}
\usepackage{algpseudocode}
\usepackage{bbm}
\usepackage{booktabs}
\usepackage{multirow}
\usepackage{pifont}    %xx
\AtBeginDocument{%
  }

\setcopyright{none}
\copyrightyear{2026}
\acmYear{2026}

\renewcommand\footnotetextcopyrightpermission[1]{}

\begin{document}

\title{Reliable Associative Lookup in Content-Addressable Memory}

\author{Fan Li}
\affiliation{%
  \institution{University of Central Florida}
  \city{Orlando}
  \state{FL}
  \country{USA}
}
\email{fan.li@ucf.edu}

\author{Yanan Guo}
\affiliation{%
  \institution{University of Rochester}
  \city{Rochester}
  \state{NY}
  \country{USA}
}
\email{yanan.guo@rochester.edu}

\author{Xin Xin}
\affiliation{%
  \institution{University of Central Florida}
  \city{Orlando}
  \state{FL}
  \country{USA}
}
\email{xin.xin@ucf.edu}

\begin{abstract}

Content Addressable Memory (CAM) is an important memory paradigm, which performs fast search by comparing an input query against all stored entries in parallel, achieving $O(1)$ lookup complexity. CAM is typically built upon conventional memory technologies, such as SRAM and Non-Volatile Memory (NVM). Accordingly, CAM can also be subject to the reliability challenges of these underlying technologies. In traditional memory systems, protection codes play a critical role in ensuring reliability and have been extensively studied. However, protection codes for CAM have remained largely unexplored. This paper takes an initial step toward addressing this longstanding gap by introducing a non-traditional code design.

\end{abstract}

\maketitle

\section{Introduction}
\label{sec: intro}

Modern memory architecture follows two access paradigms: (1) address-based, i.e., traditional memory, which takes an address as input (e.g., SRAM, DRAM, and NVM), and (2) content-based, i.e., Content Addressable Memory (CAM), which accepts data as input and performs fully parallel associative search, enabling deterministic $O(1)$ lookup latency regardless of data structure size.
Beyond its traditional use in network routing~\cite{cam_routing,cam_routing2} and highly associative cache designs~\cite{cam_cache_1, cam_cache_2, cam_cache_3}, CAM surges as a compelling foundation for modern architectural innovations, including database accelerators~\cite{cam_db_1, cam_db_2, cam_db_4}, sequence alignment engines~\cite{cam_seq_aligement, cam_seq_aligement_2}, and processing-in-memory (PIM) substrates~\cite{CAPE, cam_db_3}.

Despite the architectural differences, CAM is typically constructed on top of conventional memory technologies. For example, CAM can build upon SRAM~\cite{SRAM-CAM, SRAM-CAM2} or NVM~\cite{nvm_cam, cam_nvm_2} cells as its storage core, while augmenting them with additional comparison logic. Consequently, CAM extends the functionality of traditional memory, but also inherits its fundamental limitations. Among these, reliability has been a longstanding challenge, and extensive efforts have been devoted to improving the reliability of memory technologies, including SRAM~\cite{seshadri2014dirty, arm_parity_ecc_caches, wilkerson2010reducing, paul2010reliability}, DRAM~\cite{bamboo, monte_carlo, udipi2012lot, jian2014ecc}, and NVM~\cite{zhang2018exploring, schechter2010use}. As such, it is unlikely that CAM, built upon these technologies, can achieve perfect reliability.

However, relatively few studies have focused on protection codes for CAM, leaving this area largely unexplored. One possible reason is the fundamental difference in failure mechanisms.
In conventional memory, data is explicitly read out, allowing bit errors to be corrected via ECC, i.e., a read-then-check process.
In contrast, CAM does not expose stored data. Errors manifest implicitly through comparison outcomes. This leads to two distinct failure modes: (1) false positives, where non-matching entries are incorrectly reported as matches, and (2) false negatives, where valid matches are missed. False positives are relatively easier to handle, since one can still follow a read-then-check process by reading out the matched entries and performing ECC verification. False negatives, however, are more challenging to address, because missed matches do not appear in the comparison results and thus provide no explicit indication of error.
Consequently, existing protection codes, designed for traditional memory, are fundamentally incompatible with CAM. 

To tackle the problem, a common approach is to employ approximate CAM that supports inexact matching~\cite{harary2024occam, pagiamtzis2006soft, approx_reliable_CAM}. For example, for an $n$-bit single error correction (SEC) codeword, a query with $n-1$ matching bits can be safely treated as a match, as the code guarantees a unique correction for a single-bit mismatch. However, this imposes stringent sensing requirements due to the reduced margin (scaling with $1/n$), leading to significant hardware modification. In contrast, a key advantage of protection codes in conventional memory is that they enhance reliability without fundamentally altering the underlying memory structure.
Alternatively, one can perform $n$ searches, each excluding a different bit position of the $n$-bit query (through CAM's masking operation), thereby enumerating all possible positions of a single-bit error. We regard this $n$-search approach as a baseline. It is functionally equivalent to the approximate CAM solution, which trades sensing complexity for additional searches. However, this degrades the lookup complexity from $O(1)$ to $O(N)$.

To bridge this gap, we propose encoding data using a balanced representation, where each $N$-bit (even $N$) codeword contains an equal number of 1s and 0s (its coding space is $\binom{N}{N/2}$). This property allows each codeword to be uniquely identified by either the positions of 1s or 0s. For example, to search for the codeword `$00110101$', one can construct the query using only the 1s as `$\text{-}\text{-}11\text{-}1\text{-}1$' or using only the 0s as `$00\text{-}\text{-}0\text{-}0\text{-}$', where `$\text{-}$' denotes the bit excluded from comparison. 
This enables tolerance to single-bit errors, as an error can only disrupt one polarity. For instance, a `$1 \rightarrow 0$' bit flips causes a mismatch with `$\text{-}\text{-}11\text{-}1\text{-}1$', but it does not impact the `$00\text{-}\text{-}0\text{-}0\text{-}$' search.
As a result, this balanced-code-based approach, termed the balanced code (BC, Section~\ref{subsec: BC}), ensures single-bit error detection with only two searches ($O$(1) complexity) and avoids altering the underlying memory structure.
Mapping a value into, and back out of, this balanced form (the encoder/decoder) is a standard step that sits \emph{off} the lookup's critical path.

To address multi-bit errors, we extend balanced codes to detect double-bit errors, which leads to the Extended Balanced Code (EBC, Section~\ref{subsec: EBC}).

More broadly, protection-code design for CAM is a systematic research problem. In contrast, ECC for conventional memory has evolved over decades~\cite{mittal2018survey} and it is therefore difficult to comprehensively address all aspects of a CAM code in a single paper. To maintain focus, we emphasize the following primary contribution:

\begin{enumerate}[label=$\bullet$]%[leftmargin=*, nosep]
    \item \,\textbf{The first protection code for associative, match-based storage.} Prior protection codes all guard the read-out path; we identify CAM's dominant \emph{silent} false-negative vulnerability on the match path and introduce the first balanced-code protection scheme designed for content-addressable memory.
\end{enumerate}

\section{Background}
\label{sec:background}

\subsection{CAM Basics}
\label{subsec: cam-basics}

CAM enables fully parallel comparison between an input query and all stored entries. As shown in Figure~\ref{fig: background}(a), the input is broadcast through search lines ($SL_i$ and $\overline{SL}_i$), while the results are returned via match lines ($ML_i$). Note that a lookup may produce zero, one, or multiple matches. In many applications, the final output is not the match signal itself but the associated data readout (e.g., in caches, a tag match triggers the retrieval of the corresponding cacheline).

As a specialized memory structure, CAM is built upon existing memory technologies. Prior work has explored CAM designs based on SRAM~\cite{SRAM-CAM}, NVM \cite{nvm_cam, cam_nvm_2}, and DRAM \cite{dram_cam}, by integrating additional logic. Consequently, CAM typically incurs higher area overhead compared to conventional memory arrays. For example, an SRAM-based CAM can be $2\sim5\times$ larger than a standard SRAM~\cite{jeloka201628}.
Figure~\ref{fig: background}(b) illustrates an SRAM-based CAM cell. Additional peripheral transistors are integrated with the storage core (Figure~\ref{fig: background}(c)) to enable an XNOR operation between the stored bit and the input (i.e., $BL \cdot SL+\overline{BL} \cdot \overline{SL}$). A mismatch causes the match line (ML) to discharge. The ML is connected to a sense amplifier that aggregates the XNOR results across all bits of an entry, and any bit mismatches result in a 0 output value.
Similarly, CAM cells can be constructed using NVM technologies. For example, Figure~\ref{fig: background}(d) shows a design based on two 1T1R NVM cells. 
Note that CAM must retain standard memory access paths (which are not explicitly shown in Figure~\ref{fig: background}(a)), e.g., wordlines ($WL$s) and bitlines ($BL$s) in Figure~\ref{fig: background}(b). Otherwise, CAM entries cannot be updated.

An often \textbf{overlooked aspect} is that the inputs $SL_i$ and $\overline{SL}_i$ need not always be complementary. When both are set to 0, the corresponding bit is effectively disabled, referred to as `masked', allowing it to be ignored during comparison\footnote{Although, in principle, a single-ended input (only $SL_i$) could suffice, this would require adding an inverter within each cell, increasing cell complexity. Therefore, dual-line inputs ($SL_i$ and $\overline{SL}_i$) is common practice~\cite{SRAM-CAM, SRAM-CAM2, nvm_cam, cam_nvm_2}.}.
Building on CAM, ternary CAM (TCAM) further extends functionality by introducing a wildcard state, enabling more flexible matching patterns. For brevity, we focus on CAM in this paper.

\begin{figure}[htbp]
    \centering
    \includegraphics[width=0.45\textwidth]{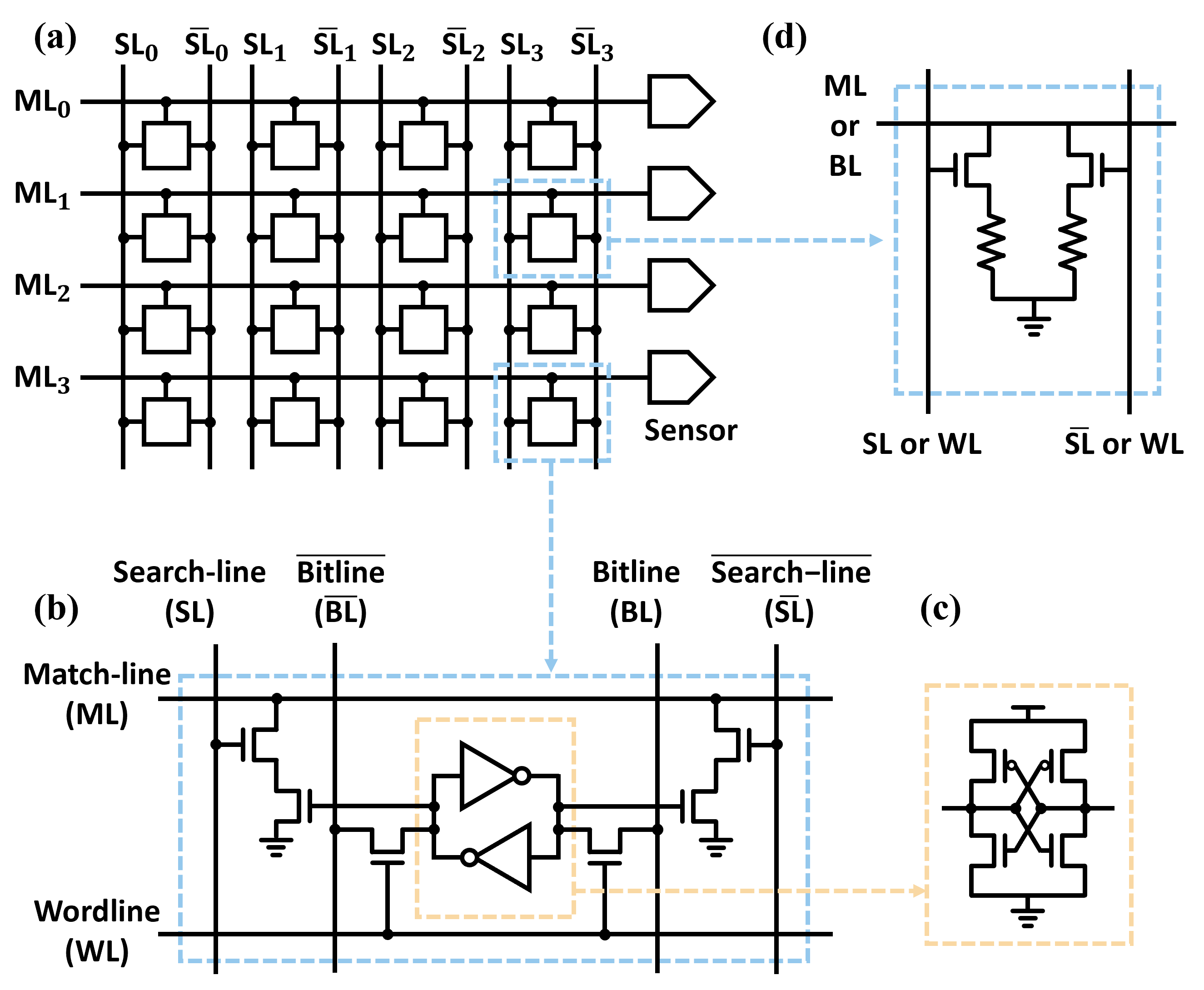}
    \caption{CAM organization: (a) the structure of a $4\times4$ CAM array, (b) a SRAM-based CAM cell, (c) the cross-coupled inverter in an SRAM cell, (d) an NVM-based CAM cell.}
    \label{fig: background}
\end{figure}

\subsection{Traditional Memory Protection}

\subsubsection{General Error Protection} 
\label{subsec: general-error-pro}
Single Error Correction -- Double Error Detection (SEC-DED) is one of the most widely adopted ECC schemes in memory systems~\cite{dell1997white}. It is based on the Hamming code theory, and typically uses 8 parity bits for a 64-bit word to form a 72-bit codeword with a 12.5\% storage overhead. To handle multi-bit errors, Bose-Chaudhuri-Hocquenghem (BCH) codes were introduced, albeit at the cost of higher redundancy~\cite{chien1964cyclic}. For example, BCH(78,64) (a double error correction (DEC) code) protects 64 data bits using 14 parity bits, allowing up to two-bit-error correction but incurring a 21.9\% storage overhead. 

The fundamental principle behind these codes is Hamming distance. To correct $t$ bit errors, a minimum Hamming distance of $d_{\min}=2t+1$ is required. More importantly, error correction capability naturally implies error detection capability. When a code is used for detection \textbf{only}, it can typically identify a larger number of errors than it can correct. In particular, a code that corrects up to $t$ bit errors can detect up to $2t$ bit errors. For example, a Single Error Correction (SEC) code can be repurposed as a Double Error Detection (DED) code, and an SEC-DED code can be used for Triple Error Detection (TED)\footnote{Correction and maximum detection capabilities are mutually exclusive. For example, a code can operate as either SEC–DED or TED, but not both, since the decoder cannot determine the error count before applying correction.}.

To protect burst-type errors (e.g., multiple adjacent bit flips), symbol-level codes have been developed by grouping multiple bits into a single unit, referred to as a \textit{symbol}. Representative schemes include cyclic redundancy check (CRC) for detection and Reed-Solomon (RS) codes for correction~\cite{reed1960polynomial}. Symbol-level and the above bit-level protection target different error characteristics and cannot replace each other. In this paper, we focus on bit-level protection. 

\subsubsection{Uni-directional Error Protection}

Many memory technologies are prone to errors with a strong unidirectional pattern (either $0 \rightarrow 1$ or $1 \rightarrow 0$)~\cite{wen2013cd, wang2018content, aliagha2019react}.
A well-known technique for unidirectional error protection is Berger coding, which counts the number of 1s (or equivalently the number of 0s) in a data word and stores the counting value (more accurately, its bitwise complement) alongside the data. Under unidirectional errors, all bit flips occur in the same direction. By recomputing the number of 1s at the receiver and comparing it with the encoded value, any discrepancy can be reliably detected. 

\subsubsection{Hardware-Centric Resilient Design}

Besides code-based protection, there also exist hardware-centric design strategies for system resilience improvement. For example, a straightforward approach to improving hardware resilience is to deploy three identical units to perform the same operations and compare their outputs for agreement. This is known as Triple-Modular Redundancy (TMR)~\cite{koren2020fault}. 
Another class of techniques is motivated by the Bathtub Curve model~\cite{bathhub}, targeting the ``infant mortality'' phase caused by manufacturing defects. Burn-in testing filters defective units by applying elevated stress (e.g., high temperature or voltage) to induce early failures. Additionally, redundant spares provide a fail-safe mechanism, maintaining functionality even if components fail prematurely in the field.

\section{Motivation}
\label{sec:motivation}

\subsection{Challenges of CAM Protection}

\noindent A CAM cell is built on a conventional storage technology and consequently inherits that technology's failure modes. Because the $di/dt$ of its comparison operation narrows the cell's voltage margin, a CAM is even \emph{more} soft-error-prone than the substrate it is built on~\cite{syafalni2016multiple}. The failure \emph{character}, however, is set by that substrate. \emph{SRAM} cells suffer particle-induced upsets that grow increasingly \emph{multi-cell} at scaled nodes: over $90\%$ of upset events at the $5$\,nm node flip many adjacent cells in \emph{both} directions~\cite{pieper2023study}, and BTI/HCI/RTN aging compounds this with a quasi-permanent ${\sim}100$\,mV threshold shift over a 10-year lifetime~\cite{zhang2019impact} that erodes read/write margins; these flips are thus largely \emph{bidirectional}. In contrast, \emph{NVM} cells such as ReRAM and PCM endure only $10^{6}{\sim}10^{9}$ writes before failing as \emph{permanent} stuck-at faults, and, unlike charge-based memory, are immune to particle-induced soft errors~\cite{schechter2010use, yavits2020wolfram}; their raw bit-error rate reaches $7{\times}10^{-5}{\sim}10^{-3}$~\cite{zhang2018exploring}, and these wear-out faults are strongly \emph{unidirectional}. \emph{STT-MRAM} instead offers near-unlimited write endurance above $10^{15}$ cycles~\cite{kan2017study}, so its dominant read-disturb, write-error, and retention faults are \emph{transient} and a rewrite corrects them~\cite{aliagha2019react}. \emph{DRAM} exhibits transient bit flips and asymmetric, \emph{unidirectional} retention failures~\cite{wang2018content, wen2013cd}; large field studies find errors dominated by hard, repeat-address faults and ${\sim}79\%$ single-bit, with the multi-bit, symbol, and chip minority delegated to in-memory ECC~\cite{meza2015revisiting, sridharan2015memory}. Across these technologies, every fault ultimately manifests as a bit flip characterized by just two parameters, its \emph{count} and its \emph{direction}. These are precisely the parameters on which our scheme's coverage depends, and they let us map each physical fault class to its bit-level equivalent.

More importantly, in stark contrast to traditional memory architectures, which explicitly read out stored data using addresses, CAM operates by comparing input data against all stored entries in parallel and reporting matches.
This fundamental difference introduces unique protection challenges. For example, as shown in Figure~\ref{fig: motivation}(a), a CAM array returns two matches. However, errors may alter the results, leading to two types of failures: (1) a non-matching entry is incorrectly reported as a match (the false positive case in Figure~\ref{fig: motivation}(b)), and (2) a matching entry is incorrectly recognized as a mismatch (the false negative case in Figure~\ref{fig: motivation}(b)). 

\begin{figure}[ht]
    \centering
    \includegraphics[width=0.5\textwidth]{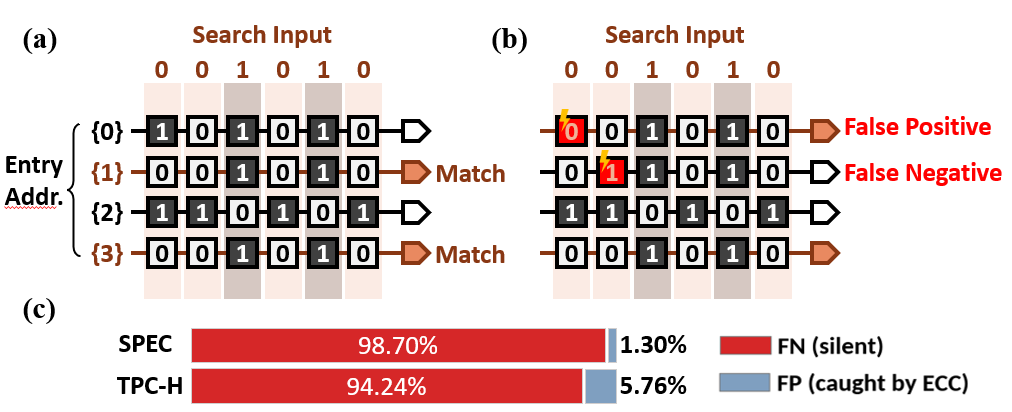}
    \caption{A CAM search operation under (a) error-free and (b) erroneous scenarios; (c) false-negative vs.\ false-positive composition of CAM error events under simulation; the two bars are the geometric mean over typical SPEC and TPC-H workloads, respectively (configuration in Section~\ref{sec:exp_method}). The silent false negative dominates in both.}
    \Description{}
    \label{fig: motivation}
\end{figure}

To address the first challenge, ECC parity (e.g., SEC) can be stored alongside each entry. Upon a match, correctness can be verified by retrieving the associated parity and checking whether it is consistent with the input data. However, the second challenge, false negatives, is both far more likely \emph{and} harder to handle. It is more likely because \emph{any} error in a queried entry deterministically produces a false negative, whereas a false positive requires a non-matching entry to be corrupted into \emph{exactly} the queried key, one specific value out of $2^W$, which is vanishingly improbable (and, in the rare case it occurs, is caught by the read-then-check above). It is harder because, unlike a false positive, a false negative is silent, leaving no direct indication of failure. In this context, traditional ECC schemes are not directly applicable~\cite{syafalni2016multiple}. We confirm this empirically via simulation across two CAM use cases: in the geometric mean over typical SPEC and TPC-H workloads, false negatives constitute $94{\sim}99\%$ of the CAM error events (Figure~\ref{fig: motivation}(c)). This confirms that the silent false negative, not the false positive, is the dominant failure mode.

In addition, false negatives can be particularly dangerous in certain scenarios. For example, in a write-buffer structure that uses CAM to detect pending writes (or in a fully associative cache that uses CAM to locate dirty cache lines), a false negative may cause a valid match to be missed, leading the system to incorrectly assume that no prior write exists. Consequently, a subsequent read may access stale data from lower memory (i.e., a read-after-write (RAW) violation). Such errors can propagate without immediate detection and eventually lead to system failures.

\subsection{Approximate CAM}

Approximate CAM was proposed for applications that require inexact matching, such as sequence alignment, where queries may differ slightly from stored entries~\cite{crafton202428nm, zhong2023asmcap, garzon2023approximate}. Unlike conventional CAM, it allows a bounded number of bit mismatches (e.g., within a given Hamming distance).
Prior work proposed to leverage this customized CAM to tackle the false negative challenge~\cite{harary2024occam, pagiamtzis2006soft, approx_reliable_CAM}. It first uses approximate matching to retrieve `near-matching' candidates, and then applies ECC to verify correctness and filter out incorrect matches.

However, enabling approximate matching introduces significant hardware overhead. Conventional matching only requires resolving a binary outcome (match or mismatch). In contrast, approximate matching must distinguish small differences corresponding to varying numbers of mismatches. For example, tolerating a single-bit difference requires resolving at least three levels (i.e., exact match, one-bit mismatch, and larger mismatches). This necessitates a high-precision ADC per row. For a CAM with $n$-bit entries, the ADC must resolve differences on the order of $1/n$, which is prohibitively expensive. For example, with 64-bit entries, the ADC overhead alone can exceed 200\% of the baseline CAM array (see Section~\ref{sec:exp_area_energy}), which is even less efficient than the TMR solution.
More importantly, introducing ADC further complicates protection, as ADC itself is sensitive to deployment conditions (e.g., IR drop) and environmental variations (e.g., temperature). To ensure that the ADC does not become the bottleneck in a resilient CAM design, even higher precision (e.g., $1/2n$) may be required, exacerbating the already significant hardware overhead.

\subsection{Design Goal}

Protection codes have been widely adopted in memory systems because they provide strong and flexible error coverage with relatively low overhead~\cite{lin2001error}. In particular, they can be adapted to different reliability requirements through various coding schemes, accommodating diverse deployment environments and application needs. In contrast, physically enhanced designs (e.g., approximate CAM structures) often incur significant cost overheads and lack adaptability to evolving error mechanisms.

However, CAM is fundamentally incompatible with conventional memory-oriented ECC schemes, and currently lacks an effective protection mechanism. The \textbf{goal} of this paper is to develop a CAM-specific protection code and use it to reconcile the fundamental misalignment between CAM and traditional memory-oriented ECC.

\noindent Beyond detection difficulty, reliability is itself a barrier to CAM \emph{scaling}. Conventional memory has ridden decades of protection-code advances (e.g., on-die ECC that improves yield and enables continued DRAM scaling~\cite{jian2014ecc, bamboo}), whereas CAM, lacking any comparable protection code, is kept small and special-purpose to bound its error exposure, on top of its already steep area and power cost ($2{\sim}5\times$ SRAM~\cite{jeloka201628}). Reliability, area, and power jointly cap how large a CAM can grow. Meanwhile, CAM's role keeps expanding into the new domains noted in Section~\ref{sec: intro}. Our balanced-code protection delivers reliability at low area and power overhead, and consequently relieves the reliability constraint within these tight budgets. This fills a long-standing gap and lets CAM scale.

\begin{figure}[ht]
    \centering
    \includegraphics[width=0.425\textwidth]{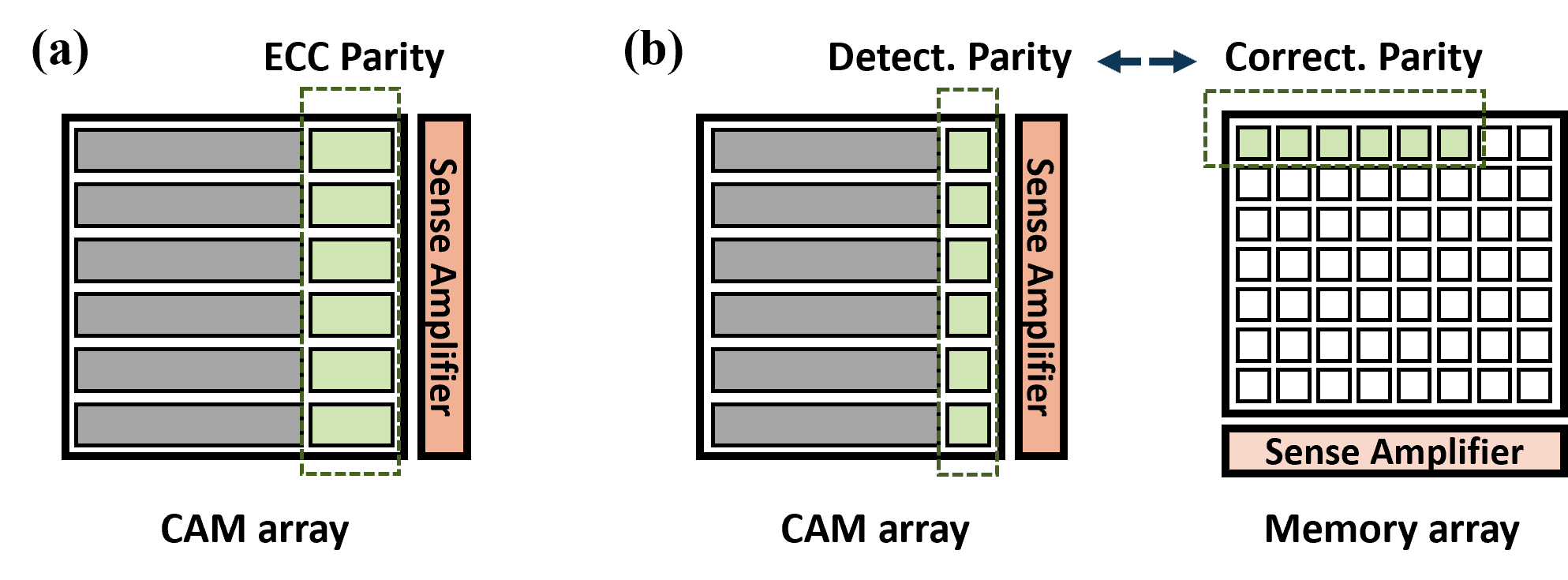}
    \vspace{-0.1in}
    \caption{CAM ECC storage strategies: (a) inline ECC and (b) decoupled ECC storage.}
    \Description{}
    \label{fig: decouple}
\end{figure}

\section{Balanced-Code Design}
\label{sec:method}

Before introducing our main design, we first highlight a common practice in traditional memory ECC design: decoupling error detection from correction~\cite{yoon2010virtualized, udipi2012lot, nair2016xed, aspa, saileshwar2018synergy}. Since errors are relatively rare per access (rates are technology-dependent; see the fault model in Section~\ref{sec:motivation}), most operations only require detection, while correction is invoked only when errors are detected. This reduces decoding latency and parity transmission overhead.
We adopt the same principle. As shown in Figure~\ref{fig: decouple}(b), parity in CAM is used solely for detection, while the corresponding correction parity is stored in memory. This reduces the overall cost of parity storage, as memory is more cost-effective than CAM (Section~\ref{subsec: cam-basics}). It also enables optimizing correction parity independently. For example, multiple entries can be combined into a larger codeword to improve the correction capability, similar to tiered ECC strategies~\cite{zhang2018exploring, udipi2012lot, bamboo, jian2013low}.

However, such decoupling-related optimizations have been extensively studied and are not the focus of this work. We adopt this decoupling primarily to simplify the presentation of our proposed design. In other words, our design also supports direct ECC storage in CAM (Figure~\ref{fig: decouple}(a)).
\textbf{In the following, we focus on the error detection process.} Correction is performed by retrieving the associated ECC parity from memory. The correction capability is aligned with the CAM detection capability, e.g., in-CAM SED (single-error-detection) with in-memory SEC (single-error-correction), and in-CAM DED (double-error-detection) with in-memory DEC (double-error-correction).
Next, we begin with a straightforward design that performs error detection for CAM through a sequence of searches, each examining a single bit position. We then present our core design, which exploits a non-traditional code, termed balanced code, to reduce the process to only two searches.

\begin{figure}[ht]
    \centering
    \includegraphics[width=0.48\textwidth]{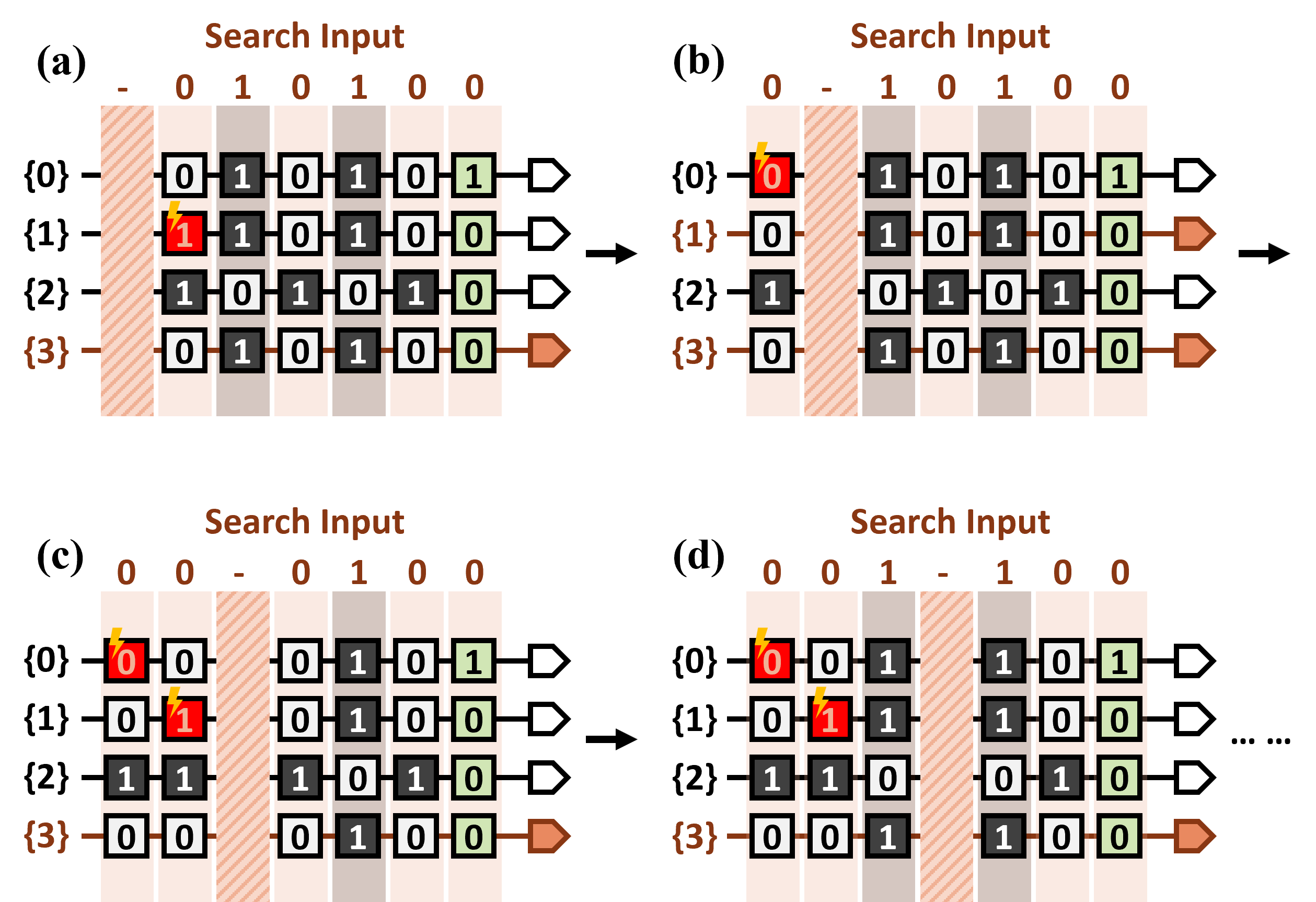}
    \caption{The mechanism of the brute-force masked search (Baseline): masking one bit of the search input and enumerating all bit positions.}
    \Description{}
    \label{fig: Baseline}
\end{figure}

\subsection{Baseline: Simple CAM-Friendly Design}
\label{subsec: Baseline}

As introduced in Section~\ref{subsec: cam-basics}, each CAM input bit can take three states: 1, 0, or a disabled (referred to as {masking}) state. Based on this observation, we introduce a simple brute-force masked search approach (the \textbf{Baseline}) to detect a single-bit error by identifying its location bit by bit.
Figure~\ref{fig: Baseline} illustrates an example using a single-error-detection (SED) code (i.e., even or odd parity), where each 6-bit CAM entry is extended with one parity bit (e.g., even parity over the 6 bits). In each search, one of the 7 bit positions is masked. If the error resides in the masked position, the remaining bits align with the input, resulting in a match. Let the output of the $i$-th search be denoted as $S_i$ (note that a CAM array may contain duplicate entries, yielding multiple matches). By iterating over all 7 bit positions, we obtain 7 output sets.
The intersection of these sets corresponds to error-free matches, denoted as the clean set $S_c$. The set difference, denoted as $S_e$, captures entries with a single-bit error. The set $S_e$ is then forwarded to the next stage, where the corresponding correction parity is retrieved from memory for error correction. 
It is also worth noting that, in most cases, $S_e$ is empty, as errors rarely occur.
The same process can also support an SEC code. In this case, no suspected set $S_e$ exists, since the masked bit can be uniquely determined by SEC. As noted earlier, we focus on detection and omit further discussion of using correction parity in CAM in the following designs.

This sequential search approach is compatible with conventional error detection (or correction) codes,  but it introduces additional latency. For single-bit error detection, it requires $n$ searches for an $n$-bit codeword, which remains practical for small $n$. However, the more fundamental challenge arises when extending it to multi-bit error detection, e.g., DED and TED (Section~\ref{subsec: general-error-pro}), as the complexity grows combinatorially. For example, detecting two-bit errors requires masking all possible pairs of bit positions, resulting in $\binom{n}{2}$ searches (i.e., $\sim\!n^2$). Similarly, detecting three-bit errors requires $\binom{n}{3}$ searches (i.e., $\sim\!n^3$). This quickly becomes impractical, motivating a more efficient design.

\begin{figure}[ht]
    \centering
    \includegraphics[width=0.45\textwidth]{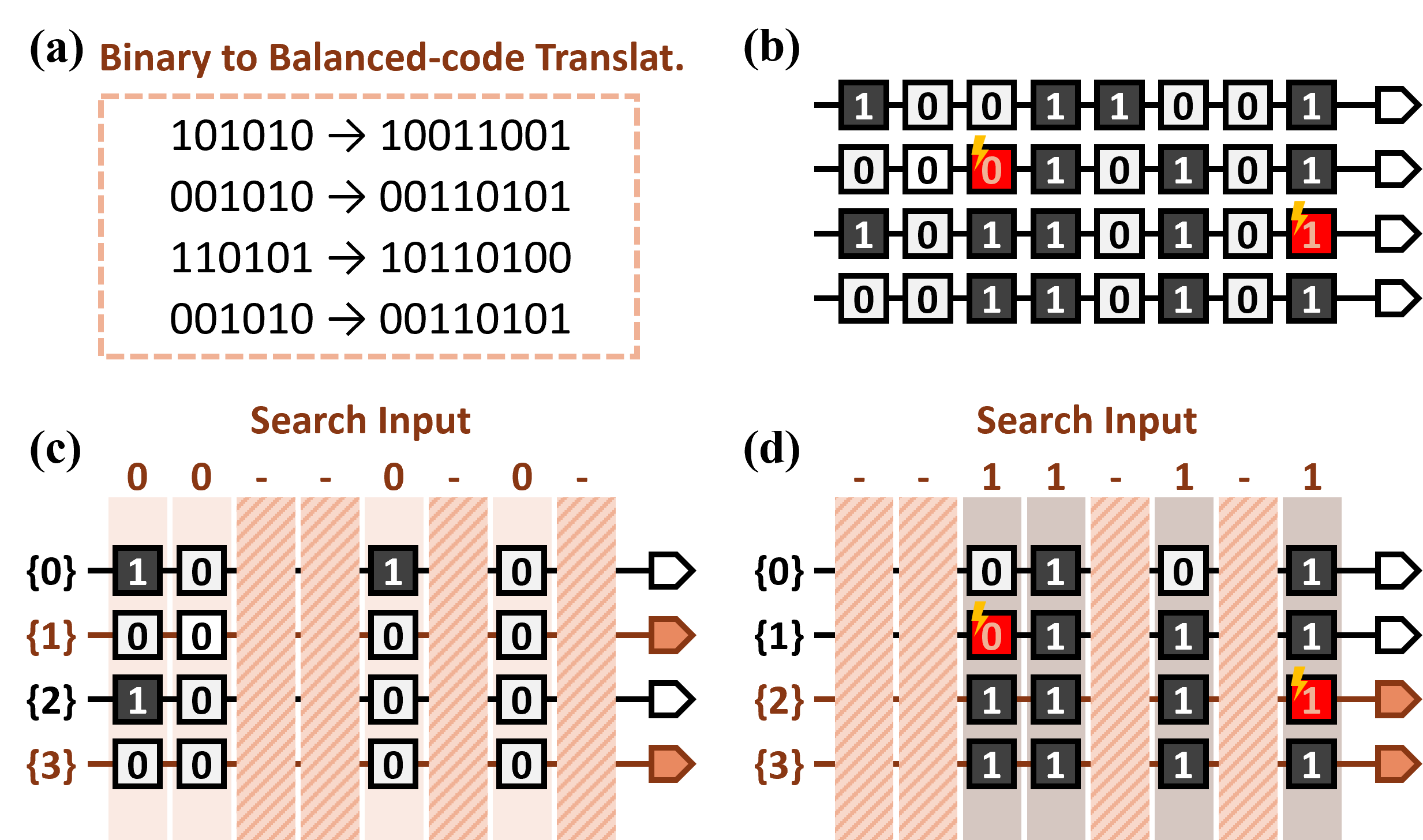}
    \caption{The mechanism of the balanced code (BC): (a) binary numbers and associated balanced code representations, (b) error scenarios, and (c) 0s-only and (d) 1s-only searches.}
    \Description{}
    \label{fig: BC}
\end{figure}

\subsection{Balanced Code (BC) for CAM}
\label{subsec: BC}

To address the above challenge, we propose \textbf{BC}, which represents data with an equal number of 0s and 1s. This format, commonly referred to as a DC-balanced or constant-weight code, is used to maintain stable average voltage or current~\cite{widmer1983dc, immink2010very, DC_wander_effect}. For simplicity, we refer to it as a balanced code in this paper. 

A key property of balanced codes is that they can detect \textbf{any number} of uni-directional errors (e.g., $1 \rightarrow 0$ flips and no simultaneous $0 \rightarrow 1$ flips), as such errors break the equal number of 0s and 1s in the codeword.
Correspondingly, in the absence of errors, identifying either the 1s or the 0s is sufficient to fully determine the codeword. For example, for an $N$-bit balanced codeword (where $N$ is even), once the $N/2$ bits of 1s are known, all remaining $N/2$ bits must be 0s (and vice versa).
In the case of a single-bit flip, it naturally constitutes a one-bit uni-directional error. Thus, balanced codes can be used to detect single-bit errors. Equivalently, balanced codes guarantee a minimum Hamming distance of 2.

The above property aligns well with CAM, as the masking operation enables selective comparison of bit positions. 
Taking the four entries in Figure~\ref{fig: motivation}(a) as an example, each 6-bit word is first encoded into an 8-bit balanced codeword, e.g., `$001010$' $\rightarrow$ `$00110101$', as shown in Figure~\ref{fig: BC}(a). Here, an $N$-bit balanced code provides a coding space of $\binom{N}{N/2}$. Accordingly, an 8-bit balanced code can provide $\binom{8}{4} = 70$ codewords, which are sufficient to represent all $2^6 = 64$ possible 6-bit values.
To search for the codeword `$00110101$', one can set the input either as `$\text{-}\text{-}11\text{-}1\text{-}1$' or `$00\text{-}\text{-}0\text{-}0\text{-}$', where `$\text{-}$' represents a masked bit. Either input uniquely determines the codeword in the absence of errors, while using both inputs provides tolerance to a single-bit error. Specifically, a single-bit flip affects either a `1' or a `0', so one of the two masked searches remains unaffected and still produces a correct match. 
As shown in Figure~\ref{fig: BC}(b), suppose entries {1, 2} each contain a single-bit flip. The detection process is as follows (Figure~\ref{fig: BC}(c) and (d)):

\begin{enumerate}[label=$\bullet$]
    \item Searching  with `$00\text{-}\text{-}0\text{-}0\text{-}$', referred to as 0s-only search, produces a match set of \{1, 3\}. % (i.e., entry\_1, entry\_3). 
    \item Searching with `$\text{-}\text{-}11\text{-}1\text{-}1$', referred to as 1s-only search, produces a match set of \{2, 3\}. % (i.e., entry\_2, entry\_3). %, where codeword\_2 is a false positive. 
\end{enumerate}

Correspondingly, the clean set $S_c$ (the intersection of the above two match sets) is \{3\}, and the suspected set $S_e$ (the difference between the two match sets) is \{1, 2\}. The set $S_e$ is then forwarded to the next error-correction stage.
In addition to single-bit error detection, the above process can naturally \textbf{detect uni-directional errors}. For example, a codeword with any number of $1 \rightarrow 0$ errors will still be included in the 0s-only search results, but excluded from the 1s-only search results, thereby distinguishing it from valid codewords.
Notably, compared to the $n$-search brute-force approach, BC reduces the single-bit error detection process to only two searches.

\section{Balanced-Code Optimization}

BC detects single-bit and uni\-direc\-tional errors in two searches, but extending balanced codes to general multi-bit detection by sequential masked search would again cost $O(N)$. To detect 2-bit errors in constant time, we rethink the role of `detection' in CAM and propose an extended balanced code with an over-inclusive detection scheme. Next, we first use 2-bit error detection as an example to explain the key idea of over-inclusive detection, and then introduce the proposed alternative.

\subsection{Over-inclusive Error Detection}

In conventional coding theory, to detect up to $t$ errors, a code must have minimum Hamming distance $d_{min} \geq t+1$. This strict requirement ensures that any corrupted codeword can be uniquely distinguished from all valid codewords. Otherwise using a code with $d_{min} = t+1$ to detect more than $t$ errors, certain error patterns may become indistinguishable from valid codewords, leading to silent data corruption (SDC).
However, in the context of CAM, the target codeword is already known, and detection is performed through search rather than direct decoding. To tolerate $t$ errors, each masked search input should differ from the target codeword in at least $t+1$ bit positions. If codewords stored in CAM has limited Hamming distance ($d_{min} < t+1$), the masked search may return a valid codeword as a match, wrongly classifying it as an invalid one.

However, this does not lead to an incorrect final result. Recall that all entries identified as invalid (classified in $Sc$) are subsequently retrieved for further error-correction checking. During this additional checking step, such falsely flagged entries can be distinguished from truly corrupted codewords.
On the other hand, enforcing large Hamming distance in balanced codes incurs significant overhead. If the collision rate (i.e., the possibility of wrongly classifying a valid codeword) is sufficiently low, we can relax the Hamming distance requirement ($d_{min} \geq t+1$) to enable more efficient balanced code designs with modest overhead.

Consider detecting errors by performing bit-by-bit masked searches over a balanced code alone (without additional parity). A single-bit error is still exposed, but the search may also flag some valid codewords as suspects, with potential ambiguity. We refer to this process as over-inclusive detection.
For example, suppose `$000011111$' and `$10000111$' are both stored in the CAM. When searching for `$000011111$' using a masked pattern such as `$\text{-}\text{-}\text{-}\text{-}\text{-}111$', both entries will be returned as matches. 
Finally, this `$10000111$' will be classified as an invalid codeword (included in set $S_c$), since it does not match in the other iterations of searches (e.g., `$\text{-}\text{-}\text{-}\text{-}1\text{-}11$'). 
Likewise, if other codewords such as `$01000111$', `$00100111$', and `$00010111$' are stored, they will also be returned as matches. In total, there are $\frac{N}{2}$ such codewords, where $N$ is the codeword length. Given the codeword space $\binom{N}{N/2}$, the collision rate is $\frac{N}{2\binom{N}{N/2}}$, which becomes negligible as $N$ increases.

However, this over-inclusive detection still requires $O(N)$ search complexity. To further reduce this overhead, we propose a more efficient balanced code tailored for this detection scheme.

\subsection{Extended Balanced Code (EBC)}
\label{subsec: EBC}

In contrast to conventional balanced codes (which enforce equal numbers of 1s and 0s), we construct a new balanced code (referred to as extended-balanced code) by enforcing equal counts of four 2-bit patterns: 00, 01, 10, and 11. For ease of presentation, we use letters $a$, $b$, $c$, and $d$ to represent these four patterns, respectively. Figure~\ref{fig: EBC}(a) illustrates an example, where each 2-bit pattern appears twice, resulting in a codeword in length of 16 bits. With $4k$ 2-bit patterns (i.e., $8k$-bit codeword length), the codeword space is $\frac{(4k)!}{(k!)^4}$. For example, when $k=9$ (i.e., $72$-bit codewords), the resulting 72-bit codeword space is $\sim2.15\times10^{19}$, which is sufficient to encode the entire 64-bit binary space ($2^{64}$). 

\begin{figure}[ht]
    \centering
    \includegraphics[width=0.425\textwidth]{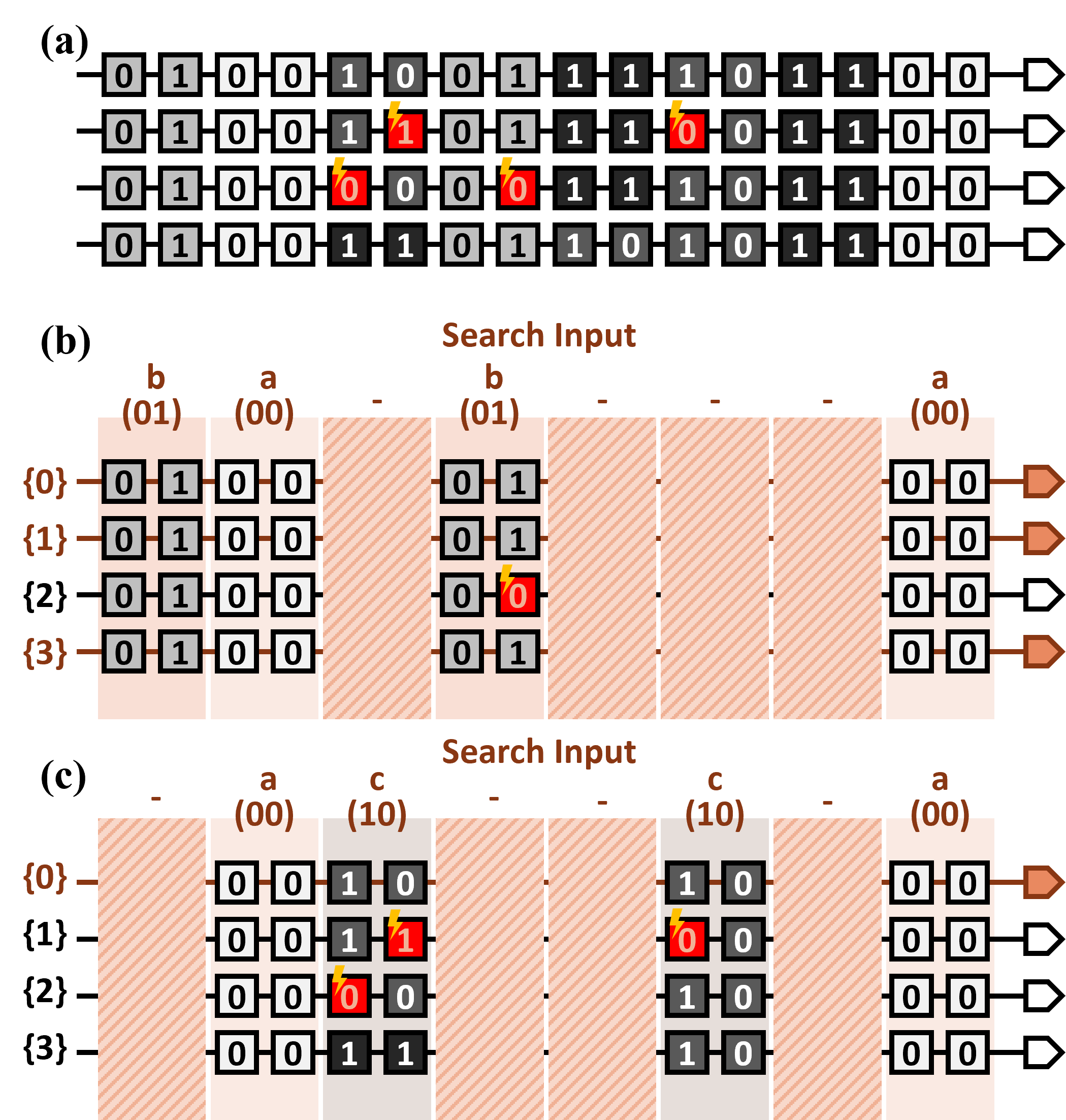}
    \caption{The mechanism of the extended balanced code (EBC): (a) a four-entry CAM array encoded in the proposed extended balanced-code format, where the first three entries store the \emph{same} codeword purely to illustrate different 2-bit error cases, and two entries are corrupted with 2-bit errors; (b) an $ab$-only search and (c) an $ac$-only search. All six pairwise searches probe each single stored codeword in place; they are repeated searches, not replicated copies.}
    \Description{}
    \label{fig: EBC}
\end{figure}

This construction inherently preserves the balance between 1s and 0s, and thus naturally supports single-bit error detection. We next focus on 2-bit error detection. A 2-bit error can affect at most two of the four pattern types. Based on this observation, we propose a protection strategy, termed EBC, which performs six masked searches. Each search preserves two pattern types while masking the other two. Using the example shown in Figure~\ref{fig: EBC}(b) and (c), the detection process is as follows:

\begin{enumerate}[label=$\bullet$]%[leftmargin=*, nosep]
    \item Searching with `$ba\text{-}b\text{-}\text{-}\text{-}a$', referred to as an $ab$-only search (masking $c$ and $d$), produces a match set of \{0, 1, 3\}. 
    \item Searching with `$\text{-}ac\text{-}\text{-}c\text{-}a$', referred to as an $ac$-only search (masking $b$ and $d$), produces a match set of \{0\}.
    \item Applying the same pattern to the remaining combinations (i.e., $ad$-, $bc$-, $bd$-, and $cd$-searches) produces four additional match sets: \{0, 1, 2\}, \{0\}, \{0, 1\}, and \{0\}.
\end{enumerate}

Similarly, the clean (error-free) set is $S_c$=\{0\}, i.e., the intersection of the six match sets. In contrast, the suspected set is $S_e$=\{1, 2, 3\}, i.e., the difference between the six match sets.  %corresponds to suspected entries with up to 2-bit errors. 
Note that the last entry \{3\} is another valid codeword, but it is wrongly classified into $S_e$. This is due to the over-inclusive error detection. Since the proposed extended-balanced code still has a minimum Hamming distance of 2, as discussed earlier, using it for 2-bit error detection may introduce collisions with other valid codewords. Specifically, the last entry \{3\} can be represented as `$badbccda$', which is returned as a match during the $ab$-only search of `$ba\text{-}b\text{-}\text{-}\text{-}a$' (Figure~\ref{fig: EBC}(b)). Theoretically, with $4k$ letters, each search can match up to $\binom{4k}{2k}$ valid codewords. The collision rate of each search is $\binom{2k}{k}/\frac{(4k)!}{(k!)^4}$. 
This rate becomes negligible as $k$ increases. For example, when $k=9$ (i.e., 72-bit codewords), the collision rate is about $2^{-48}$.
Notably, compared to the $O(N)$ over-inclusive (sequential) scheme, EBC reduces error detection to a constant number of only 6 searches.

\begin{theorem}[Constant-time 2-bit detection]
\label{thm:ext}
Let $x$ be an extended-balanced codeword with a fixed composition over the four 2-bit patterns $\{a,b,c,d\}$. Every weight-preserving 2-bit fault in $x$ is detected by the six pairwise masked searches $\{ab, ac, ad,$ $bc, bd, cd\}$.
\end{theorem}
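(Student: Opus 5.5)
Fix the semantics first. A stored word $y$ is \emph{detected} (placed in the suspected set $S_e$) with respect to the query codeword $x$ exactly when $y$ matches at least one of the six pairwise masked searches but not all of them. In the $PQ$-only search, only positions $i$ where $x_i\in\{P,Q\}$ are compared; these positions are called $I_{PQ}$. A 2-bit fault flips two bits of $x$. If both flips lie in the same 2-bit letter, they change that letter to its complement ($a\leftrightarrow d$ or $b\leftrightarrow c$). If they lie in different letters, each affected letter moves to a Hamming-neighbour letter. ``Weight-preserving'' means the fault is one $1\to0$ flip and one $0\to1$ flip, so the number of ones is unchanged and plain BC cannot see it; a non-weight-preserving fault is already caught by the balance property from Section~\ref{subsec: BC}. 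The goal is to show that the corrupted word $y$ lies in the union of the six match sets but not in their intersection.

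\textbf{Step 1 (not in the intersection).} At least one letter position $i$ is changed by the fault, so $y_i\neq x_i$. The letter $x_i$ belongs to three of the six pairs, and every search for one of those pairs compares position $i$ and therefore mismatches. So $y$ fails at least one search and cannot enter $S_c$.

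\textbf{Step 2 (in the union).} Any fault corrupts at most two letter positions, say with original letters $x_i$ and $x_j$ (possibly $i=j$). The letters $\{x_i,x_j\}$ span at most two of the four types, so at least two types $R,S\in\{a,b,c,d\}$ are disjoint from them. The $RS$-only search compares only positions whose original letter is $R$ or $S$. None of these positions was touched by the fault, so $y$ agrees with $x$ on $I_{RS}$ and the $RS$-only search matches $y$. This is the pigeonhole observation quoted before the theorem, namely that a 2-bit error touches at most two pattern types. Combining Steps 1 and 2 places $y$ in $S_e$, so the fault is detected. The argument uses only that the composition of $x$ is fixed and that every type occurs, which holds because each type appears $k\ge1$ times, so each $I_{PQ}$ is nonempty and Step 1's mismatch is actually compared.

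\textbf{Main obstacle.} The delicate point is Step 2 when one search matches $y$ but another must fail. Step 1 settles this: the failing search is forced by the original letter at a changed position, regardless of what that letter became. A subtle case is when the two changed positions carry the same original letter, e.g.\ both $a$. Then four of the six searches exclude neither position, and I will check explicitly that the $bc$-, $bd$- and $cd$-only searches still match. I will also remark, without needing it for the theorem, that over-inclusion can place valid codewords in $S_e$ too. This does not undermine detection, which only asserts that corrupted words cannot land in $S_c$. The weight-preserving hypothesis is therefore not actually needed, and the proof covers all 2-bit faults; I would note this as a strengthening.
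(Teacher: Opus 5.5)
Your proof is correct. Your Step~1 is exactly the paper's argument: a changed slot whose original letter is $p$ violates every search that retains $p$, so at least three of the six searches fail and the entry cannot land in $S_c$.

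The paper stops at that point and treats ``excluded from $S_c$'' as ``detected.'' You go further. Using the pigeonhole fact that the fault touches at most two letter types, you show that some $RS$-only search still matches. The corrupted entry therefore lies in the union of the match sets, and hence in $S_e$. This second half is what actually rules out the silent false negative the scheme is built to prevent: an entry that fails all six searches would just look like a non-match. So your version proves the stronger and more relevant property. The paper uses this pigeonhole observation only informally, in the text before the theorem.

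Your positional formulation also avoids the paper's claim that some pattern's multiplicity ``strictly drops.'' That claim fails for a composition-preserving swap, such as $b\to a$ in one slot and $a\to b$ in another. The paper's own positional constraint argument still covers that case, so its conclusion survives.

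As you observe, neither step uses weight preservation, so the result holds for every fault of at most two bits. That makes your aside that non-weight-preserving faults are ``already caught by the balance property'' unnecessary. EBC never runs the 0s-only or 1s-only searches in any case.

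There is one miscount in your last paragraph. When both changed positions originally carry $a$, three searches ($ab$, $ac$, $ad$) compare both positions and three ($bc$, $bd$, $cd$) mask both; it is not four.
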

\begin{proof}
Partition $x$ into $S$ two-bit slots, each holding one of the patterns $\{a,b,c,d\}$ with a per-pattern multiplicity (composition) fixed by the code. A weight-preserving 2-bit fault flips one bit $1{\to}0$ and one $0{\to}1$; whether the two flips fall in one slot or two, at least one slot's two-bit symbol changes, so some query pattern $p$ loses a slot and its multiplicity strictly drops. Each pairwise search ``$pq$'' imposes the constraint that every $p$-slot reads $p$ and every $q$-slot reads $q$ (all other slots masked); the changed slot, which the query labels $p$ but now reads $p'{\neq}p$, violates this constraint, so every search retaining $p$ mismatches. Pattern $p$ occurs in three of the six pairs ($pa,pb,\dots$), so at least three searches fail and the entry is excluded from the clean set~$S_c$ and is detected; an uncorrupted codeword satisfies all six constraints. The scheme is over-inclusive (Section~\ref{subsec: EBC}): a distinct valid codeword may also be excluded, with probability $\le\!2^{-48}$ at $k{=}9$, but it is filtered by the subsequent ECC check, so no fault is missed.
\end{proof}

\begin{table}[th]
      \centering
      \caption{Balanced-code design comparison in terms of codeword length (L) and search count (\#) for 16-, 32-, and 64-bit data}
      \label{tab: bc_compare}
      \small
      \setlength{\tabcolsep}{3pt}
      \begin{tabular}{l c r r r r r r r r}
        \hline \hline
        \rowcolor[gray]{0.9}
        & \textbf{ERR}
        & \multicolumn{2}{c}{\textbf{16-bit}}
        & \multicolumn{2}{c}{\textbf{32-bit}}
        & \multicolumn{2}{c}{\textbf{64-bit}}
        &
        \\
        \rowcolor[gray]{0.9}
        \textbf{Design}
        & \textbf{Cov.}
        & \textbf{L} & \textbf{\#}
        & \textbf{L} & \textbf{\#}
        & \textbf{L} & \textbf{\#}
        & \textbf{Code Scheme}
        \\
        \hline
        Brute-Force
          & 1b
          & 17 & 19
          & 33 & 33
          & 65 & 65
          & even parity
          \\
        BC
          & 1b
          & 19 & 2
          & 36 & 2
          & 68 & 2
          & balanced code
          \\
          \hline
        Brute-Force
          & 2b
          & 21 & 210
          & 38 & 703
          & 71 & 2485
          & hamming DED
          \\ 
          EBC
          & 2b
          & 22 & 6
          & 36 & 6
          & 72 & 6
          & extend. balanced
          \\
        \bottomrule
      \end{tabular}
    \end{table}

\subsection{Balanced Code Length Reduction}

In practice, the total bit count of a balanced codeword does not need to be even. For example, to encode 8-bit data, the minimum $N$ satisfying $\binom{N}{\lfloor N/2 \rfloor} \ge 2^8$ is 11. This yields an 11-bit codeword with five 1s and six 0s, i.e., a constant-weight-5 codeword (where the weight denotes the number of 1s). Although the numbers of 1s and 0s are not equal, all valid codewords still have a fixed number of 1s. As a result, any single-bit error changes this weight and can be detected, providing the same error detection capability as `strictly' balanced codes.
Similarly, the extended-balanced code can adopt the same strategy: it is not necessary to enforce equal counts of the four letters ($a, b, c, d$), as long as each codeword maintains a fixed composition of these letters.

Taking a further step, we can relax the constant-weight constraint of balanced codes to reduce overhead. Instead of using only codewords with weight $\lfloor N/2 \rfloor$, we can draw codewords from two nearby weight classes. For example, to encode 8-bit data, we use 10-bit codewords by selecting 128 constant-weight-4 codewords (four 1s and six 0s) from the $\binom{10}{4}$ space and another 128 constant-weight-6 codewords from the $\binom{10}{6}$ space. This construction preserves a minimum Hamming distance of 2 among all valid codewords, while relaxing the uniform constant-weight constraint. As a result, single-bit error detection remains intact. However, it compromises the capability to fully direct uni-directional errors. For example, both `$0001111000$' and `$0011111100$' are valid codewords under this construction, where the former can be transformed into the latter through two $0 \rightarrow 1$ flips. For simplicity, we present this balanced code, constructed from two constant-weight classes, as an optional optimization and do not incorporate it into the main design.

Table~\ref{tab: bc_compare} summarizes all our balanced-code schemes, along with their corresponding codeword lengths and search counts for error detection. Using the 64-bit data protection as an example and the brute-force masked search as the \textbf{baseline}, for single-bit error detection, BC slightly increases the codeword length from 65 to 68, while reducing the search count from 65 to a constant 2. For 2-bit error detection, EBC uses a 72-bit codeword while reducing the search count from 2485 to a constant 6.

\section{Evaluation}
\label{sec:eval}

\subsection{Methodology}
\label{sec:exp_method}

In this section, we evaluate the hardware cost of our scheme.
We consider the following schemes:
\begin{itemize}[leftmargin=*,align=left]
    \item \textbf{Ideal (No Protection)}: the unprotected CAM, used as the reference, i.e., the area upper bound with no error detection.
    \item \textbf{Baseline (Brute-Force)}: a simple brute-force masked search (Section~\ref{subsec: Baseline}), which is theoretically compatible with ECC schemes without hardware modification. It provides 1-bit error coverage with in-CAM SED parity and in-memory SEC parity, and 2-bit coverage with in-CAM DED parity and in-memory DEC parity.
    \item \textbf{Ours}: (1) BC, which provides 1-bit error coverage with in-CAM balanced code and in-memory SEC parity, and (2) EBC, which provides 2-bit error coverage with in-CAM extended balanced code and in-memory DEC parity.
    \item \textbf{Approximate CAM}: prior ADC-based designs (OCCAM~\cite{harary2024occam}, SpaceCAM~\cite{approx_reliable_CAM}) that tolerate errors within a single search but require a per-row ADC; we model both ADC-SAR and ADC-Flash implementations. With SEC/DEC ECC they provide 1-/2-bit coverage.
\end{itemize}

\paragraph{\textbf{Hardware Overhead.}}
Hardware overhead here includes both storage overhead and hardware-modification-induced overhead. As aforementioned, memory is more cost-effective than CAM. It is more efficient to separate the storage of detection parity (e.g., SED, DED) and correction parity (e.g., SEC, DEC) into CAM and memory, respectively. All schemes follow this strategy, and memory parity overhead is normalized to equivalent CAM overhead.

In terms of hardware modification, approximate CAM incurs additional overhead due to modifications required for inexact matching. To evaluate the associated ADC overhead, we use NeuroSIM V1.4~\cite{neurosim}, considering both SAR~\cite{SAR_ADC} and flash ADCs~\cite{flash_ADC} as representative implementations.
We use the 32\,nm technology node as the reference point. We choose three representative CAM technologies: SRAM-based CAM, ReRAM-based CAM, and STT-MRAM-based CAM. The latter two are also known as NVM-based CAM~\cite{cam_nvm_1, cam_nvm_2}. We use CACTI 7.0~\cite{cacti} and NVSim~\cite{nvsim} for memory modeling and deriving the baseline area.

\subsection{Results}
\label{sec:exp_results}
\label{sec:exp_area_energy}

\begin{table}[th]
  \centering
  \caption{Hardware overhead comparison (SRAM-based)}
  \label{tab:area_breakdown}
  \small
  \setlength{\tabcolsep}{3pt}
  \begin{tabular}{l c r r r r r r}
    \hline \hline
    \rowcolor[gray]{0.9}
    & \textbf{ERR}
    & \multicolumn{2}{c}{\textbf{CAM-only}}
    & \multicolumn{2}{c}{\textbf{CAM+Mem}}
    & \multicolumn{2}{c}{\textbf{Overall}}
    \\
    \rowcolor[gray]{0.9}
    \textbf{Scheme}
    & \textbf{Cov.}
    & \textbf{$\mu$m$^2$} & \textbf{Ovh.}
    & \textbf{$\mu$m$^2$} & \textbf{Ovh.}
    & \textbf{$\mu$m$^2$} & \textbf{Ovh.}
    \\
    \hline
    Ideal (No Prot.)
      & ---
      & 39{,}105 & ---
      & 39{,}105 & ---
      & 39{,}105 & ---
      \\
    \hline
    Brute-Force
      & 1b
      & 39{,}687 & +1.5\%
      & 41{,}227 & +5.4\%
      & 41{,}227 & +5.4\%
      \\
    Brute-Force
      & 2b
      & 43{,}178 & +10.4\%
      & 46{,}177 & +18.1\%
      & 46{,}177 & +18.1\%
      \\
    BC
      & 1b
      & 41{,}432 & +6.0\%
      & 42{,}973 & +9.9\%
      & 42{,}973 & +9.9\%
      \\
    EBC
      & 2b
      & 43{,}759 & +11.9\%
      & 46{,}759 & +19.6\%
      & 46{,}759 & +19.6\%
      \\
    \hline
    ADC-SAR
      & 1b
      & 39{,}687 & +1.5\%
      & 41{,}227 & +5.4\%
      & 350{,}941 & +797\%
      \\
    ADC-SAR
      & 2b
      & 43{,}178 & +10.4\%
      & 46{,}177 & +18.1\%
      & 355{,}891 & +810\%
      \\
    ADC-Flash
      & 1b
      & 39{,}687 & +1.5\%
      & 41{,}227 & +5.4\%
      & 167{,}746 & +329\%
      \\
    ADC-Flash
      & 2b
      & 43{,}178 & +10.4\%
      & 46{,}177 & +18.1\%
      & 172{,}695 & +342\%
      \\
    \bottomrule
  \end{tabular}
\end{table}
\begin{figure}[!htb]
  \centering
  \includegraphics[width=\linewidth]{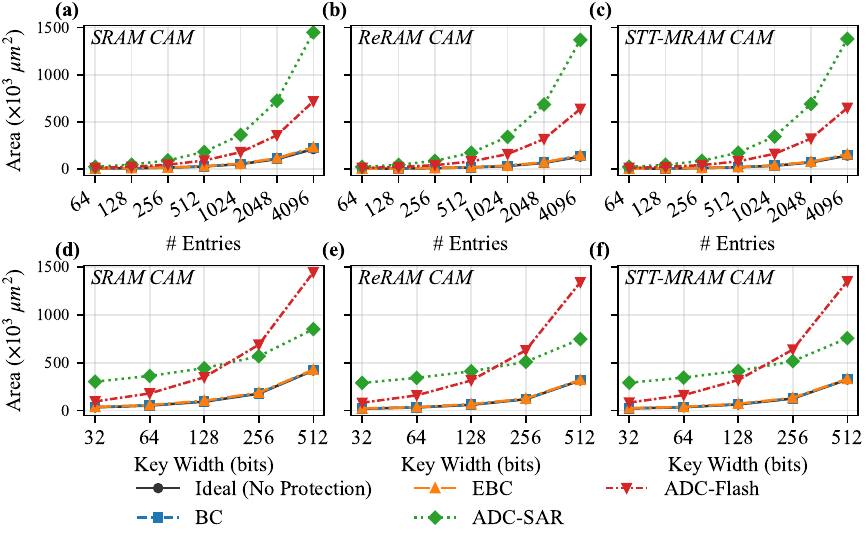}
  \vspace{-0.3in}
  \caption{
    (a)--(c)~Area overhead across various entry counts with 68-bit width;
    (d)--(f)~Area overhead across various entry widths with 1K entries;}
  \label{fig: area_combined}
\end{figure}

\paragraph{\textbf{Hardware Overhead.}}
Table~\ref{tab:area_breakdown} presents the area breakdown of each protection scheme under three hardware scopes: \textit{CAM-only}, including only the CAM array; \textit{CAM+Mem}, including the CAM array and in-memory ECC parity storage; and \textit{Overall}, representing the complete design with additional auxiliary components, such as ADCs, in peripheral circuitry.
In particular, under the \textit{CAM-only} scope, our scheme manifests slightly higher overhead (e.g., 6\% for 1b BC vs. 1.5\% for 1b ADC-SAR) due to the use of balanced codes, which have marginally higher overhead than traditional ECC. When incorporating in-memory parity, schemes have the same overhead increase (i.e., 3.9\% for 1b schemes and 7.7\% for 2b schemes), as they all employ the same correction code.
However, once peripheral overhead is included (the \textit{Overall} scope), the trend reverses: ADC-based designs require an ADC for each CAM entry, causing the auxiliary hardware overhead alone to exceed 300\% in our evaluated setting, even higher than TMR.
Figure~\ref{fig: area_combined} further confirms this trend across different entry counts and widths. In particular, as CAM capacity scales up, the area advantage of our scheme becomes increasingly pronounced.

\section{Conclusion}

In conclusion, we introduce a balanced-code protection scheme, which leverages balanced encoding to enable bit-level error detection in CAM. The key designs, BC and EBC, achieve a constant number of searches without modifying the underlying array structure. This work establishes the feasibility of protection code design for CAM and opens a new direction for reliable associative memory systems.

\bibliographystyle{ACM-Reference-Format}
\bibliography{bibs/fan,bibs/xin}
\end{document}